\documentclass[12pt]{article}

\usepackage{csquotes}
\usepackage{amsmath}
\usepackage{amsthm}
\usepackage{amssymb}
\usepackage{enumerate}
\usepackage{natbib}
\usepackage{url}

\usepackage{algorithm}
\usepackage{algpseudocode}

\newtheorem{lemma}{Lemma}

\newtheorem{theorem}{Theorem}

\usepackage{hyperref}

\newcommand{\R}{\mathbb{R}}
\newcommand{\N}{\mathbb{N}}

\newcommand{\E}{\mathbb{E}}
\newcommand{\bP}{\mathbb{P}}

\newcommand{\F}{\mathcal{F}}

\newcommand{\iid}{\overset{\mathrm{iid}}{\sim}}

\begin{document}

\markboth{T. Hilbert}{Predictive Inference via Kernel Density Estimates}

\title{Predictive Inference via Kernel Density Estimates}

\author{Torey Hilbert
\hspace{.2cm}\\
    Department of Statistics, Ohio State University
}

\maketitle

\begin{abstract}
Kernel density estimation is a widely used nonparametric approach to estimate an unknown distribution. Recent work in Bayesian predictive inference has considered stochastic processes formed by specifying the predictive distribution for the next data point given all observed data such that the resulting predictive distributions converge weakly almost surely. We study two kernel based prediction rules: the classic kernel density estimator, and a recursive version previously introduced for online problems. We show that both processes converge weakly almost surely, which opens the door for new Bayesian interpretations of kernel density estimation. Surprisingly, the process based on the classic kernel density estimates converges to a compactly supported measure, while the recursive version converges to a non-compactly supported measure.
\end{abstract}

\section{Introduction}\label{sec-intro}

Recent work on Bayesian predictive inference has considered modeling data by specifying a collection of prediction rules for $X_{n+1} | X_{1:n}$, which when supplemented with a distribution over $X_1$ gives a joint distribution over an infinite sequence of data $X_{1:\infty}$.
Denoting the (random) predictive measures by $P_n(B) = \bP(X_{n+1} \in B | X_{1:n})$ for Borel sets $B \in \mathcal{B}(\R^d)$, if the sequence of predictive measures $P_n$ almost surely converges weakly to some (random) probability measure $P$, then the conditional limiting predictive measure $P | X_{1:n}$ is a posterior distribution for a Bayesian analysis
\citep{fortini2020quasi, fong2023martingale, fortini2025exchangeability}.
The recent Bayesian predictive inference literature has largely focused on predictive distributions satisfying the conditionally identically distributed (CID) property \citep{berti2004limit, fong2023martingale, fortini2025exchangeability}, or on predictive distributions that are ``almost'' a CID sequence as in \citet{battiston2025bayesian}.

Part of the appeal of the Bayesian predictive inference framework, however, is the idea that the analyst could use any predictive distributions that they feel are ``natural''.
When handed a non-parametric prediction problem, the first approach for many statisticians is to use the empirical distribution, leading to Bayesian bootstrap  methods \citep{ferguson1973bayesian, rubin1981bayesian, ghosh1997bayesian}. These methods are generalized in the works on measure-valued P\'olya sequences \citep{berti2023kernel, sariev2023infinite, chorbadzhiyska2025mvps}. A second natural approach for many statisticians is to smooth the empirical distribution, for example with a kernel density estimator \citep{silverman2018density}.

Let $K$ be a probability measure on $\R^d$, and let $h_n > 0$ be a sequence of bandwidths. For each $n$ and Borel set $B \in \mathcal{B}(\R^d)$, we consider the predictive rule
\begin{align}\label{eq-kernel}
    \bP(X_{n+1} \in B | X_{1:n}) = \frac{1}{n}\sum_{i = 1}^n K\bigg(\frac{B - X_i}{h_n}\bigg).
\end{align}
Here we use the notation $(B-X_i)/{h_n} = \{ (b - X_i) / h_n : b\in B\} \subseteq \R^d$.
To fully specify the model we will arbitrarily set $X_1 = 0$. Denote the sequence of random predictive measures by $P_n(B) = \bP(X_{n+1} \in B | X_{1:n})$.
For example, if $K = N(0, 1)$ on $\R$, this is equivalent to using the probability density function
\begin{align*}
    f(x | X_{1:n}) = \frac{1}{n} \sum_{i = 1}^n \frac{1}{h_n} \frac{1}{\sqrt{2\pi}} \exp \bigg(-\frac{1}{2}\bigg(\frac{x - X_i}{h_n}\bigg)^2\bigg),
\end{align*}
the classic kernel density estimator (KDE).

In this work, we show that for modestly heavy-tailed kernels $K$ and bandwidths $h_n$ that decay at order $n^{-\delta}$ for some $\delta > 0$, the sequence of predictive distributions defined by the KDE process above converges weakly almost surely to a limiting random measure $P$. Furthermore, with additional moment assumptions, the limiting random measure $P$ is supported on a random compact set. While the almost sure weak convergence of $P_n$ is expected, the fact that the limiting measure $P$ is almost surely compactly supported was surprising to us.

While our primary interest is model~\ref{eq-kernel}, we also consider a recursive kernel estimator introduced by \citet{wolverton1969asymptotically}.
For standard implementations, the KDE process and this alternative recursive kernel process behave qualitatively differently, since the limiting measure of the recursive kernel process almost surely has non-compact support. For the recursive kernel estimator, \citet{battiston2025bayesian} find that for very rapidly decaying bandwidths, such as $h_n = e^{-n}$, the sequence of predictive measures converges weakly almost surely for some kernels $K$. However, their approach does not work for $h_n$ satisfying $n h_n \to \infty$, which is a well-known condition for frequentist point-wise consistency of the KDE \citep{silverman2018density}. We show, under much weaker conditions, that the recursive kernel estimator also converges weakly almost surely.

\section{Kernel Density Process}\label{sec-main}

Another way to write the same sequence of variables in model~\ref{eq-kernel} is as follows:
\begin{align*}
    Y_{1:\infty} &\iid K,\\
    M_n &\overset{ind}{\sim} \text{Unif}(\{1, \ldots, n\}), \\
    X_{n + 1} &= X_{M_n} + h_n Y_n.
\end{align*}
In contrast to $X_{1:\infty}$, we also consider a different process $\tilde{X}_{1:\infty}$ defined by recursive kernel estimates, as in \citet{wolverton1969asymptotically, battiston2025bayesian}. We will use a tilde to differentiate objects connected to $X_{1:\infty}$ vs $\tilde{X}_{1:\infty}$. We define the recursive kernel process by
\begin{align}\label{eq-recursive}
    \tilde{P}_n(B) = \bP(\tilde{X}_{n + 1} \in B | \tilde{X}_{1:n}) = \bigg(1 - \frac{1}{n}\bigg)\tilde{P}_{n - 1}(B) + \frac{1}{n} K\bigg(\frac{B - \tilde{X}_n}{h_n}\bigg).
\end{align}
Again we arbitrarily select $\tilde{X}_1 = 0$ to start the sequence. The key difference between $X_{1:\infty}$ and $\tilde{X}_{1:\infty}$ is that the former model uses the same bandwidth $h_n$ around all points and recomputes the KDE at every step, while the latter model never updates the bandwidth for an already observed point in the sequence. Subjectively, we prefer the classic KDE approach defining the sequence $X_{1:\infty}$, but, as the two models have qualitatively different asymptotic behavior, we analyze both. Another way to write the recursive process is
\begin{align*}
    Y_{1:\infty} &\iid K,\\
    M_n &\overset{ind}{\sim} \text{Unif}(\{1, \ldots, n\}), \\
    \tilde{X}_{n + 1} &= \tilde{X}_{M_n} + h_{M_n} Y_n.
\end{align*}

The following two results show that both the classic KDE predictive rule and the recursive predictive rule are asymptotically exchangeable, and thus can be used in the Bayesian predictive inference paradigm laid out by \citet{fortini2025exchangeability}.

\begin{theorem}[Weak convergence for KDE process]\label{thm-kde-conv}
    Suppose $0 < h_n \leq Cn^{-\delta}$ for some $C, \delta > 0$, and suppose $\E(\|Y_1\|) < \infty$. Then $\bP(P_n \overset{weak}{\to} P) = 1$ for some random probability measure $P$ on $\R^d$. Additionally, if $\E(\|Y_1\|^{1/c}) < \infty$ for some $0 < c < \delta$, then $P$ is almost surely supported on a (random) compact set $V$, and $\bP(\sup_{n \in \N} \|X_n\| < \infty) = 1$.
\end{theorem}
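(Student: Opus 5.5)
The plan is to work with the empirical measure $\mu_n = \frac1n\sum_{i=1}^n \delta_{X_i}$ and the filtration $\F_n = \sigma(X_{1:n})$. By construction $P_n = \mu_n * K_{h_n}$, where $K_{h_n}$ is the law of $h_nY_1$. For a bounded $L$-Lipschitz $f$ this gives
\begin{align*}
|P_n(f) - \mu_n(f)| \le L h_n \E\|Y_1\| \le LC\,\E\|Y_1\|\, n^{-\delta}.
\end{align*}
Since $h_n \to 0$, it suffices to prove that $\mu_n$ converges weakly almost surely. The limit $P$ of $\mu_n$ will then also be the weak limit of $P_n$.

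\textbf{Convergence for fixed test functions.} For a fixed bounded Lipschitz $f$, write
\begin{align*}
\mu_{n+1}(f) = \mu_n(f) + \frac{f(X_{n+1}) - \mu_n(f)}{n+1}.
\end{align*}
Taking conditional expectations, $\E[\mu_{n+1}(f) - \mu_n(f)\mid \F_n] = (P_n(f) - \mu_n(f))/(n+1)$, which is $O(n^{-1-\delta})$ and hence summable. The martingale part has increments bounded by $2\|f\|_\infty/(n+1)$, so its conditional variances are summable. Therefore $\mu_n(f)$ is a quasi-martingale and converges almost surely.

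\textbf{Tightness.} Apply the same computation to the unbounded function $f(x) = \|x\|$. The triangle inequality gives $\E[\|X_{n+1}\| \mid \F_n] \le \mu_n(\|\cdot\|) + h_n\E\|Y_1\|$. Hence $\mu_n(\|\cdot\|)$ is a nonnegative almost-supermartingale with summable perturbations. By Robbins--Siegmund it converges almost surely to a finite limit, and in particular $\sup_n \mu_n(\|\cdot\|) < \infty$ a.s.

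\textbf{Conclusion of the first part.} Tightness, together with a.s. convergence of $\mu_n(f)$ for $f$ in a countable convergence-determining class of bounded Lipschitz functions, gives weak convergence of $\mu_n$ to a random probability measure $P$. By the first display, $P_n$ converges weakly to the same $P$.

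For the compact support claim, use the ancestry representation $X_{n+1} = X_{M_n} + h_nY_n$. Iterating, each $X_n$ is a sum of $h_kY_k$ over the indices $k$ on its ancestral chain back to $X_1 = 0$, so
\begin{align*}
\|X_n\| \le \sum_{k \in \text{chain}(n)} h_k\|Y_k\|.
\end{align*}
Since $\E\|Y_1\|^{1/c} < \infty$, we have $\sum_k \bP(\|Y_k\| > k^c) < \infty$. By Borel--Cantelli, $\|Y_k\| \le A k^c$ for all $k$, with a random finite $A$. Thus each chain element in the dyadic block $[2^j, 2^{j+1})$ contributes at most $C' A\, 2^{j(c-\delta)}$.

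Bounding $\sum_k k^{c-\delta}$ directly would require $\delta > c+1$, so the main obstacle is to show instead that every chain visits each dyadic block only $O(j)$ times, uniformly over all $n$. For an index $k$ in block $j$, the parent $M_{k-1}$ lies in the same block with probability at most about $1/2$, independently along the chain. The expected number of indices in block $j$ whose chain stays in block $j$ for $m$ consecutive steps is therefore at most about $2^j 2^{-m}$. Choosing $m = 2j$ makes these probabilities summable in $j$. Borel--Cantelli then gives that, for all large $j$, no chain has more than $2j$ elements in block $j$. Consequently
\begin{align*}
\sup_n \|X_n\| \le R := C'' A\sum_j j\, 2^{j(c-\delta)} + (\text{finitely many early terms}) < \infty \quad \text{a.s.},
\end{align*}
using $c < \delta$.

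Finally, every $\mu_n$ is supported in the closed ball $\{\|x\| \le R\}$, and this ball is closed. By the portmanteau theorem, the weak limit $P$ of the $\mu_n$ is supported in the same ball, so we may take $V = \{\|x\| \le R\}$.
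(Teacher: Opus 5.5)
Your proposal is correct. It departs from the paper's proof in two places.

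\textbf{Convergence.} Your tightness step is the same as the paper's. The paper compensates $J_n=\frac1n\sum_{i\le n}U_i$ into a nonnegative martingale, where $U_{n+1}=U_{M_n}+h_n\|Y_n\|$ dominates $\|X_n\|$; that is your Robbins--Siegmund step. For convergence, however, the paper works with characteristic functions. It rescales $\phi_n(t)=\hat\mu_n(t)\,\phi_K(h_nt)$ by the infinite product $c_n(t)=\prod_{k\ge n}b_k(t)$ into a bounded martingale. This needs Lemma~\ref{lem-product} and $\phi_K(h_nt)\neq0$ for $n\ge N(t)$. Your route is different: the comparison $|P_n(f)-\mu_n(f)|\le Lh_n\E\|Y_1\|$, then an additive split of $\mu_n(f)$ into an $L^2$-bounded martingale plus an absolutely summable drift. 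This gains three things:
\begin{itemize}
\item it avoids the product lemma entirely;
\item it makes explicit the passage from countably many test functions to weak convergence, which the paper leaves implicit even though its null sets depend on $t$;
\item it identifies $P$ as the a.s.\ weak limit of the empirical measures, so the final support claim is a one-line portmanteau argument on a closed ball.
\end{itemize}

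\textbf{Compact support.} Both proofs run Borel--Cantelli over dyadic blocks to show that every ancestral chain has $O(j)$ members in block $j$. The paper counts forward. It bounds the number of descendants in $\{n+1,\dots,2n\}$ of each $X_j$, $j\le n$, by a P\'olya urn with beta-binomial tail $3(n-1)(2/3)^k$, union-bounded over the $n$ ancestors. It controls $\max_{j\in J_m}\|Y_j\|$ by Markov's inequality at level $2^{rm}$ with $r\in(c,\delta)$. Your backward count differs: each in-block step has probability at most $1/2$, so a run of $2j$ steps costs at most $2^{-2j}$ per starting index. This bounds the chain length itself rather than the larger descendant count, and needs no urn computation. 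Your per-index bound $\|Y_k\|\le Ak^c$ also holds for every $c>0$. The paper's Jensen step $\E\max_j\|Y_j\|\le(\E\sum_j\|Y_j\|^{1/c})^c$ tacitly uses $c\le1$, which is harmless since the case $c>1$ reduces to $c=1$.

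In exchange, the paper's product normalization gives an exact bounded martingale for the predictive characteristic function itself. The same template is reused for the recursive process of Theorem~\ref{thm-recursive-conv}. Your additive argument would adapt there as well, with drift of order $n^{-2}\sum_{k\le n}h_k$.
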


\begin{theorem}[Weak convergence for recursive process]\label{thm-recursive-conv}
    Suppose $0 < h_n \leq Cn^{-\delta}$ for some $C, \delta > 0$, and suppose $\E(\|Y_1\|) < \infty$. Then $\bP(\tilde{P_n} \overset{weak}{\to} \tilde{P}) = 1$ for some random probability measure $\tilde{P}$ on $\R^d$.
\end{theorem}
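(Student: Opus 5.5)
The statement to prove is Theorem~\ref{thm-recursive-conv}, on the recursive process $\tilde X_{1:\infty}$. The plan is to compare $\tilde P_n$ with the empirical measure $\tilde\mu_n = \frac1n\sum_{i=1}^n \delta_{\tilde X_i}$. Unrolling the recursion~\eqref{eq-recursive} gives
\begin{align*}
\tilde P_n(B) = \frac1n \sum_{i=1}^n K\bigg(\frac{B-\tilde X_i}{h_i}\bigg),
\end{align*}
that is, $\tilde P_n$ is the law of $\tilde X_M + h_M Y$ with $M\sim\mathrm{Unif}\{1,\ldots,n\}$ and $Y\sim K$ independent. For a bounded Lipschitz test function $f$ with $\|f\|_\infty\le1$ and $\mathrm{Lip}(f)\le1$ we get
\begin{align*}
|\tilde P_n f - \tilde\mu_n f| \le \frac1n\sum_{i=1}^n \min\big(2, h_i\E\|Y_1\|\big) \le \frac{C\,\E\|Y_1\|}{n}\sum_{i=1}^n i^{-\delta} \to 0 .
\end{align*}
So it suffices to show that $\tilde\mu_n$, or equivalently $\tilde P_n$, converges weakly almost surely. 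Since weak convergence is metrized by the bounded Lipschitz metric and there is a countable convergence-determining class of such $f$, it is enough to show that $\tilde P_n f$ converges a.s.\ for each fixed $f$ in that class, together with a.s.\ tightness of $(\tilde P_n)$.

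For a fixed $f$, the key step is a quasi-martingale argument. Let $\mathcal F_n=\sigma(\tilde X_{1:n})$. From~\eqref{eq-recursive},
\begin{align*}
\tilde P_{n+1}f - \tilde P_n f = \frac{1}{n+1}\Big(K_{h_{n+1}}f(\tilde X_{n+1}) - \tilde P_n f\Big),
\end{align*}
where $K_h f(x)=\E f(x+hY_1)$. Given $\mathcal F_n$, we have $\tilde X_{n+1}\sim\tilde P_n$, so
\begin{align*}
\E\big[K_{h_{n+1}}f(\tilde X_{n+1})\mid\mathcal F_n\big]-\tilde P_n f=\tilde P_n(K_{h_{n+1}}f-f),
\end{align*}
which is bounded in absolute value by $h_{n+1}\E\|Y_1\|$. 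Therefore
\begin{align*}
\big|\E[\tilde P_{n+1}f-\tilde P_nf\mid\mathcal F_n]\big| \le \frac{C\,\E\|Y_1\|}{(n+1)^{1+\delta}},
\end{align*}
which is summable. The martingale increments are bounded by $2/(n+1)$, so their squares are summable. Hence $\tilde P_nf = A_n + N_n$, where $A_n$ is the sum of the drifts and converges absolutely, and $N_n$ is an $L^2$-bounded martingale that converges a.s. This gives a.s.\ convergence of $\tilde P_n f$ for every $f$ in the countable class, simultaneously on one probability-one event.

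The main obstacle is tightness: a priori $\tilde X_n$ could drift to infinity, and the limits above would then define a sub-probability functional. To handle it I would work with $\tilde P_n g$ for $g(x)=\min(\|x\|,R)$, or directly with the first moment. Define $m_n=\tilde P_n\|\cdot\| = \frac1n\sum_{i\le n}\E_Y\|\tilde X_i+h_iY\|$. The same computation as above, now with the unbounded $f(x)=\|x\|$ (which is $1$-Lipschitz), shows that $\E[m_{n+1}-m_n\mid\mathcal F_n]\le h_{n+1}\E\|Y_1\|/(n+1)$. Thus $m_n$ is a nonnegative almost-supermartingale in the sense of Robbins--Siegmund, and it converges a.s.\ to a finite limit. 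In particular $\sup_n m_n<\infty$ a.s., and Markov's inequality gives $\sup_n\tilde P_n(\|x\|>R)\le \sup_n m_n/R$, which is a.s.\ tightness. Tightness together with convergence of $\tilde P_nf$ on a convergence-determining class then yields $\tilde P_n\to\tilde P$ weakly a.s.\ for a random probability measure $\tilde P$. Measurability of $\tilde P$ follows because it is an a.s.\ limit of random measures.
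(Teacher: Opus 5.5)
Your proof is correct. Its skeleton matches the paper's: a.s.\ tightness from a first-moment supermartingale plus Markov's inequality, then a.s.\ convergence of $\tilde P_n f$ for countably many test functions via martingale convergence, then Prokhorov. The key convergence step, however, takes a genuinely different route. The paper uses characteristic functions and the exact identity $\E[\tilde\phi_{n+1}(t)\mid\F_n]=\tilde a_n(t)\tilde\phi_n(t)$. It normalizes by the infinite product $\tilde c_n(t)=\prod_{k\ge n}\tilde a_k(t)$ to get a bounded martingale, and this needs Lemma~\ref{lem-product} so that the product can be divided back out. You instead handle any bounded Lipschitz $f$ additively, as a summable drift of size at most $\mathrm{Lip}(f)\,h_{n+1}\E\|Y_1\|/(n+1)$ plus a martingale with $O(1/n)$ increments. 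This needs no product lemma and no nonvanishing of $\phi_K$, and it makes transparent that the argument uses only $\E\|Y_1\|<\infty$ and $\sum_n h_n/n<\infty$. Your explicit reduction to a countable separating class plus tightness also supplies a step the paper leaves implicit: the null set depends on $t$, so one must intersect over countably many $t$ and then invoke tightness. For tightness, using $m_n=\tilde P_n\|\cdot\|$ as a Robbins--Siegmund almost-supermartingale is a slightly more direct version of the paper's argument. The paper instead builds the dominating scalar process $\tilde U_n$ and the exact martingale $\tilde J_n+\sum_{k\ge n}c_k$, to the same effect. What the paper's multiplicative route buys is exactness. Because $\tilde S_n(t)$ is a bounded martingale, $\E[\tilde S(t)\mid\F_n]=\tilde c_n(t)\tilde\phi_n(t)$ gives a closed form for the conditional mean of the limiting characteristic function given the data, which is natural in the predictive-Bayesian reading; your decomposition yields convergence only. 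Your opening comparison with $\tilde\mu_n$ is not needed for the theorem, though it does show that the empirical measure has the same limit.
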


A common choice for densities on $\R$ is $h_n = O(n^{-1/5})$, which in a frequentist setting minimizes asymptotic mean integrated square error under regularity conditions e.g., \citet{jones1996brief, silverman2018density}.
Our theorems establish that, for any choice of kernel $K$ with finite first moment, both the KDE process $P_n$ and the recursive process $\tilde{P}_n$ will converge weakly almost surely.
Furthermore, if $K$ has finite sixth moments then $P$ almost surely has compact support.
For densities on $\R^d$, typically we use $h_n = O(n^{-1/(d+4)})$, so that if $\|Y_1\|$ has $d + 4$ moments then $P$ has compact support. Lastly, when $K$ has exponential tails, $P$ has compact support for any $\delta > 0$.

Typically one would use a kernel $K$ with $\E[Y_1] = 0$, but to demonstrate a critical difference between the KDE process and the recursive kernel process, we will consider an artificial example with $K([0, \infty)) = 1$ that is substantially more straightforward mathematically.
Let $K = |N(0, 1)|$, the absolute value of a $N(0, 1)$ random variable, and set $h_n = n^{-\delta}$.
Immediately, we see that $\tilde{P}_n$ will always include a $\frac{1}{n}|N(0, h_1^2)|$ term coming from $X_1 = 0$, so by the Borel-Cantelli lemma with $\sum_{n \in \N} 1/n = \infty$, we have $\bP(\sup_{n \in \N} \|\tilde{X}_n\| = \infty) = 1$, in contrast to $\bP(\sup_{n \in \N} \|X_n\| < \infty) = 1$.
Fix $M$ and suppose $\tilde{X}_n > M$ for some $n$. Let $\tilde{p}_m = \frac{1}{m}|\{j \geq n: \tilde{X}_j \text{ is a descendant of }\tilde{X}_n\}|$. Using a Polya urn scheme, we get $\tilde{p}_m \overset{a.s.}{\to} \tilde{p} \sim \text{Beta}(1, n-1)$. Then
\begin{align*}
    \tilde{P}_m(\{ x: |x| > M\}) \geq \tilde{p}_m \not\to 0.
\end{align*}
But for any $M > 0$, almost surely $\tilde{X}_n > M$ for some $n$. Thus the limiting measure $\tilde{P}$ is not compactly supported. The same result holds for $K = N(0, 1)$, albeit with more tedious details since the descendants of $\tilde{X}_n$ are no longer strictly greater than $\tilde{X}_n$.

\section{Proofs}\label{sec-proof}

\subsection{Notation}
First we consider the non-negative processes on $\R$ given by defining $W_n = \|Y_n\|$ and
\begin{align*}
    U_{n + 1} = U_{M_n} + h_{n} W_n, \quad \tilde{U}_{n + 1} = \tilde{U}_{M_n} + h_{M_n} W_n,
\end{align*}
where $U_1 = 0$ and $\tilde{U}_1 = 0$.
Notice that $\|X_n\| \leq U_n$ and $\|\tilde{X}_n\| \leq \tilde{U}_n$. Throughout we will let $\F_n = \sigma(\{X_k, \tilde{X}_k : k \leq n\})$ be the sigma-field generated by data up to step $n$.

\subsection{Tightness for the KDE process}

Define $J_n = \frac{1}{n}\sum_{i = 1}^n U_i$. Then
\begin{align*}
    \E[U_{n+1}| \F_n] &= \E[U_{M_n} + h_n W_n | \F_n] = \frac{1}{n}\sum_{i = 1}^n U_i + h_n \E[W_1],\\
    \E[J_{n+1} | \F_n] &= \frac{n}{n+1} J_n + \frac{1}{n+1} \E[U_{n + 1} | \F_n]  = J_n + c_n,
\end{align*}
where $c_n = \frac{1}{n+1} h_n \E[W_1]$. Notice that $\sum_{k = 1}^\infty c_k < \infty$. Defining the sequence of non-negative variables $S_n = J_n + \sum_{k = n}^\infty c_k$, we get
\begin{align*}
    \E[S_{n+1} | \F_n] = J_n + c_n + \sum_{k = n + 1}^\infty c_k = S_n,
\end{align*}
so that $S_n$ is a non-negative martingale. Then $S_n \overset{a.s.}{\to} S$, and $\sum_{k = n}^\infty c_k \to 0$, so we get the almost sure convergence $J_n \overset{a.s.}{\to} S$.
Fix $\epsilon > 0$, and let $N$ be such that for all $n \geq N$, $|J_n - S| < \epsilon$ and $h_n \E[W_1] < \epsilon$. Defining the predictive measures $Q_n(B) = \bP(U_{n+1} \in B | \F_n)$, 
\begin{align*}
    Q_n((S/\epsilon + 1, \infty))  &\leq \frac{J_n + h_n \E[W_1]}{S/\epsilon + 1} \leq \epsilon\Big(\frac{S + \epsilon}{S + \epsilon} + \frac{\epsilon}{S + \epsilon}\Big) \leq 2\epsilon.
\end{align*}
But since $\|X_n\| \leq U_n$, this implies $\limsup_n P_n(\{ x: \|x\| > S/\epsilon + 1\}) \leq 2\epsilon$, showing that $\{P_n\}_{n \in \N}$ is almost surely a tight collection of measures.

\subsection{Tightness for the recursive process}

Similarly, we define $\tilde{J}_n = \frac{1}{n}\sum_{i = 1}^n \tilde{U}_i$. Then
\begin{align*}
    \E[\tilde{U}_{n+1}| \F_n] &= \E[\tilde{U}_{M_n} + h_{M_n} W_n | \F_n] = \frac{1}{n}\sum_{i = 1}^n \tilde{U}_i + \frac{1}{n}  \sum_{i = 1}^n h_i \E[W_1],\\
    \E[\tilde{J}_{n+1} | \F_n] &= \frac{n}{n+1} \tilde{J}_n + \frac{1}{n+1} \E[\tilde{U}_{n+1} | \F_n] = \tilde{J}_n + c_n,
\end{align*}
where $c_n = \frac{1}{n(n+1)} \sum_{i = 1}^n h_i \E[W_1]$. Notice that $\sum_{k = 1}^\infty c_k < \infty$. Defining the sequence of non-negative variables $\tilde{S}_n = \tilde{J}_n + \sum_{k = n}^\infty c_k$, we get
\begin{align*}
    \E[\tilde{S}_{n+1} | \F_n] = \tilde{J}_n + c_n + \sum_{k = n + 1}^\infty c_k = \tilde{S}_n,
\end{align*}
so that $\tilde{S}_n$ is a non-negative martingale. Then $\tilde{S}_n \overset{a.s.}{\to} \tilde{S}$, and $\sum_{k = n}^\infty c_k \to 0$, so we get the almost sure convergence $\tilde{J}_n \overset{a.s.}{\to} \tilde{S}$.
Fix $\epsilon > 0$, and let $N$ be such that for all $n \geq N$, $|\tilde{J}_n - \tilde{S}| < \epsilon$ and $\frac{1}{n}\sum_{k = 1}^n h_k \E[W_1] < \epsilon$. Defining the predictive measures $\tilde{Q}_n(B) = \bP(\tilde{U}_{n+1} \in B | \F_n)$, 
\begin{align*}
    \tilde{Q}_n((\tilde{S}/\epsilon + 1, \infty))  &\leq \frac{\tilde{J}_n + \frac{1}{n}\sum_{k = 1}^n h_k \E[W_1]}{\tilde{S}/\epsilon + 1} \leq \epsilon\Big(\frac{\tilde{S} + \epsilon}{\tilde{S} + \epsilon} + \frac{\epsilon}{\tilde{S} + \epsilon}\Big) \leq 2\epsilon.
\end{align*}
But since $\|\tilde{X}_n\| \leq \tilde{U}_n$, this implies $\limsup_n \tilde{P}_n(\{ x: \|x\| > \tilde{S}/\epsilon + 1\}) \leq 2\epsilon$, showing that $\{\tilde{P}_n\}_{n \in \N}$ is almost surely a tight collection of measures.

\subsection{Convergence of KDE characteristic functions}

Let $\phi_n(t)$ be the characteristic function of $P_n$, that is $\E[\exp(it^TX_{n+1}) | \F_n]$. Let $\phi_K(t)$ be the characteristic function of $K$. Then
\begin{align*}
    \phi_n(t) = \frac{1}{n} \sum_{k = 1}^n \exp(it^T X_k) \phi_K(h_n t),
\end{align*}
so that
\begin{align*}
    \E[\phi_{n+1}(t) | \F_n] &= \frac{1}{n+1} \sum_{k = 1}^n \exp(it^T X_k) \phi_K(h_{n+1} t) + \frac{1}{n+ 1} \E[\exp(it^T X_{n+1} | \F_n)]\phi_K(h_{n+1}t) \\
    &= \frac{n}{n+1} \frac{\phi_K(h_{n+1} t)}{\phi_K(h_n t)} \phi_n(t) + \frac{1}{n+1} \phi_n(t) \phi_K(h_{n+1} t)\\
    &= \bigg(\frac{\phi_K(h_n t)}{n+1}  + \frac{n}{n+1}\bigg) \frac{\phi_K(h_{n+1}t)}{\phi_K(h_n t)} \phi_n(t).
\end{align*}
Notice, there exists $N$ such that for all $n \geq N$, $\phi_K(h_n t) \neq 0$ by the continuity of $\phi_K$ at $t = 0$. Define 
\begin{align*}
    a_n(t) &= \frac{\phi_K(h_n t)}{n+1}  + \frac{n}{n+1},\\
    b_n(t) &= a_n(t) \frac{\phi_K(h_{n+1}t)}{\phi_K(h_n t)}, \\
    c_n(t) &= \Big(\prod_{k = n}^{\infty} b_k(t)\Big) = \Big(\prod_{k = n}^{\infty} a_k(t)\Big) \frac{1}{\phi_K(h_n t)},
\end{align*}
where in Lemma~\ref{lem-product} we show that $\prod_{k = n}^{\infty} a_k(t) \neq  0$. Then $\E[\phi_{n+1}(t)|\F_n] = b_n(t) \phi_n(t)$.
Define the martingale $S_n(t) = c_n(t) \phi_n(t)$, starting at step $N$ to avoid dividing by $0$, by
\begin{align*}
    \E[S_{n + 1}(t) | \F_n] &= \Big(\prod_{k = n+1}^\infty b_k(t)\Big) \E[\phi_{n+1}(t) |\F_n] = \Big(\prod_{k = n+1}^\infty b_k(t)\Big) b_n(t)\phi_n(t) = S_n(t).
\end{align*}
Since $\sup_{n \geq N}|c_n| < \infty$, we have $|S_n(t)| \leq \sup_{n \geq N}|c_n| < \infty$, so $S_n(t)$ is a bounded martingale. Thus $S_n(t) \overset{a.s.}{\to} S(t)$, so that $\phi_n(t) \overset{a.s.} \to S(t)$ because $c_n \neq 0$ for all $n \geq N$ and $\lim_n c_n(t) = 1$.

We have already shown $\{P_n\}_{n \in \N}$ almost surely is tight, and now the sequence of associated characteristic functions converges almost surely to $S(t)$; thus $P_n \overset{weak}{\to} P$ almost surely for some (random) measure $P$.

\subsection{Convergence of recursive process characteristic functions}

Let $\tilde{\phi}_n(t)$ be the characteristic function of $\tilde{P}_n$, that is $\E[\exp(it^T \tilde{X}_{n+1}) | \F_n]$. Then
\begin{align*}
    \tilde{\phi}_n(t) = \frac{1}{n} \sum_{k = 1}^n \exp(it^T \tilde{X}_k) \phi_K(h_k t),
\end{align*}
so that
\begin{align*}
    \E[\tilde{\phi}_{n+1}(t) | \F_n] &= \frac{1}{n+1} \sum_{k = 1}^n \exp(it^T \tilde{X}_k) \phi_K(h_k t) + \frac{1}{n+ 1} \E[\exp(it^T \tilde{X}_{n+1} | \F_n)]\phi_K(h_{n+1}t) \\
    &= \frac{n}{n+1} \phi_n(t) + \frac{1}{n+1} \tilde{\phi}_n(t) \phi_K(h_{n+1} t).
\end{align*}
Notice that there exists $N$ such that for all $n \geq N$, $\phi_K(h_n t) \neq 0$ by the continuity of $\phi_K$ at $t = 0$. Define
\begin{align*}
    \tilde{a}_n(t) &= \frac{\phi_K(h_{n+1} t)}{n+1}  + \frac{n}{n+1},\\
    \tilde{c}_n(t) &= \prod_{k = n}^{\infty} \tilde{a}_k(t),
\end{align*}
where in Lemma~\ref{lem-product} we show that $\prod_{k = n}^{\infty} a_k(t) \neq  0$. Then $\E[\tilde{\phi}_{n+1}(t)|\F_n] = \tilde{a}_n(t) \tilde{\phi}_n(t)$.
Define the martingale $\tilde{S}_n(t) = \tilde{c}_n(t) \tilde{\phi}_n(t)$, starting at step $N$ to avoid dividing by $0$, by
\begin{align*}
    \E[\tilde{S}_{n + 1}(t) | \F_n] &= \Bigg(\prod_{k = n+1}^\infty \tilde{a}_k(t)\Bigg) \E[\tilde{\phi}_{n+1}(t) |\F_n] = \Bigg(\prod_{k = n+1}^\infty \tilde{a}_k(t)\Bigg) \tilde{a}_n(t) \tilde{\phi}_n(t) = \tilde{S}_n(t).
\end{align*}
Since $|\tilde{c}_n| \leq 1$, we have $|\tilde{S}_n(t)| \leq 1$, so $\tilde{S}_n(t)$ is a bounded martingale. Thus $\tilde{S}_n(t) \overset{a.s.}{\to} \tilde{S}(t)$, so that $\tilde{\phi}_n(t) \overset{a.s.} \to \tilde{S}(t)$ because $\tilde{c}_n \neq 0$ for all $n \geq N$ and $\lim_n \tilde{c}_n(t) = 1$.

We have already shown $\{\tilde{P}_n\}_{n \in \N}$ almost surely is tight, and now the sequence of associated characteristic functions converges almost surely to $\tilde{S}(t)$; thus $\tilde{P}_n \overset{weak}{\to} \tilde{P}$ almost surely for some (random) measure $\tilde{P}$.

\subsection{Compact support for limiting measure of KDE process}

Assume that $\E(\|Y_1\|^{1/c}) < \infty$ for some $0 < c < \delta$.
Let $L_{n, j}$ be the number of descendants of $X_j$ during the steps $\{n+1, n+2, \ldots, 2n\}$; for example, at step $n + 1$, we might have $M_{n + 1} = j$, and then at step $n + 2$ we might have $M_{n + 2} = n+1$, so that there are two descendants of $X_j$, thus $L_{n, j} = 2$. Then $L_{n, j}$ can be modeled using a P\'{o}lya urn, namely as an urn that starts with 1 black ball and $n - 1$ red balls, and $L_{n, j}$ is the number of black balls drawn after $n$ steps.
Then using the beta-binomial form for the P\'{o}lya urn, we have
\begin{align*}
    \bP(L_{n, j} = k) &= \binom{n}{k}\frac{B(k + 1, 2n - k - 1)}{B(1, n-1)} \\
    &= (n - 1)\frac{\prod_{i = 0}^{k - 1} (n - i)}{\prod_{i = 0}^{k - 1}(2n - 1 - i)} \\
    &\leq (n-1)(2/3)^k,\\
    \bP(L_{n, j} \geq k) &\leq \sum_{r = k}^n (n-1)(2/3)^{r} \leq 3(n-1)(2/3)^{k}.
\end{align*}
Then calling $L_n = \max_{j \in \{1, \ldots, n\}} L_{n, j}$,
\begin{align*}
    \bP(L_n \geq k) \leq 3 n(n -1)(2/3)^{k}.
\end{align*}
Now let us partition $\N$ into sets $J_m = \{ j : 2^m + 1\leq j \leq 2^{m + 1}\}$, starting with $J_0$. Pick some $r \in (c, \delta)$. We consider the events
\begin{align*}
    A_m = \{ L_{2^m} \geq 10m\}, \quad B_m = \Big\{ \max_{j \in J_{m}} \|Y_j\| \geq 2^{r m} \Big\}.
\end{align*}
We aim to show that $A_m$ and $B_m$ happen only finitely many times.
\begin{align*}
    \bP(A_m) &\leq 3 \cdot 2^m (2^m - 1) (2/3)^{10m} \leq 3 \cdot \exp(-m (10 \log (3/2) - \log 4)),
\end{align*}
and thus $\sum_{m = 1}^\infty \bP(A_m) < \infty$. Next, we have
\begin{align*}
    \E\Big[\max_{j \in J_m} \|Y_j\| \Big] &\leq \bigg(\E\sum_{j \in J_m}^n \|Y_j\|^{1/c}\bigg)^c \leq 2^{cm} \E(|Y_i|^{1/c})^c,\\
    \bP\Big(\max_{j \in J_m} \|Y_j\| \geq 2^{rm}\Big) &\leq \frac{\E\Big[\max_{j \in J_m} \|Y_j\| \Big]}{2^{rm}} \leq C_1 2^{-m(r-c)},
\end{align*}
so that $\sum_{m = 1}^\infty \bP(B_m) < \infty$.
Thus there exists a random variable $N$ such that for all $m \geq N$, neither $A_m$ nor $B_m$ occur. Then by tracing the genealogy of $X_n$ through each set of steps given by $J_m$,
\begin{align*}
    \|X_n\| &\leq \sum_{m = 0}^{\infty} L_{2^m} h_{2^m} \max_{j \in J_{m}} \|Y_j\| \\
    &\leq \sum_{m = 0}^N L_{2^m} h_{2^m} \max_{j \in J_{m}} \|Y_j\| + \sum_{m = N}^\infty h_{2^m} 10m \cdot 2^{rm} \\
    &\leq \sum_{m = 0}^N L_{2^m} h_{2^m} \max_{j \in J_{m}} \|Y_j\|+ \sum_{m = N}^\infty 10mC (2^m)^{-(\delta - r)} = T < \infty,
\end{align*}
where we have defined the random variable $T$ in the last line. Then $\bP(\sup_{n \in \N} \|X_n\| < \infty) = \bP(\sup_{n \in \N} \|X_n\| \leq T) = 1$. If $\sup_{n \in \N} \|X_n\| < \infty$ then
\begin{align*}
    P_m(\{x : \|x\| > \sup_n \|X_n\| + \epsilon\}) \leq \bP(h_m \|Y_m\| > \epsilon) \to 0.
\end{align*}

\section{Discussion}\label{sec-conc}

Kernel density estimation is an appealing approach in part due to its simplicity, and thus substantial work has gone into understanding whether it can be interpreted in a Bayesian sense, for example \citet{west1991kernel}. Similar to the analysis of Newton's algorithm by \citet{fortini2020quasi}, we can loosely interpret the random limiting measures $P$ and $\tilde{P}$ as draws from a prior distribution over measures on $\R^d$ associated with the predictive rule.
When $K = N(0, 1)$ and $h_n = Cn^{-1/5}$, the support of $P$ being almost surely compact suggests that using the classic KDE approach as a predictive rule loosely assumes the existence of a compact set $V \subseteq \R^d$ that supports the distribution, although $V$ may be very large.
This holds even when using a heavier-tailed kernel. For example, Theorem~\ref{thm-kde-conv} tells us that a $t_7$ distribution for $K$ will give rise to a $P$ supported on a compact set. This is surprising, but can be viewed as a slightly stronger parallel to the well-known fact that the tails of a Dirichlet process mixture model are lighter than the tails of the base measure \citep{ghosal2017fundamentals}.

\section*{Declaration of the use of generative AI and AI-assisted technologies}

During the preparation of this work the author used Claude Opus 4.7 in order to assist in a final literature review. After using this tool/service the author reviewed and edited the content as necessary and takes full responsibility for the content of the publication.

\section*{Acknowledgement}

The author thanks Steven N. MacEachern for helpful feedback on a draft of the paper.

\appendix

\section*{Appendix 1}

\begin{lemma}\label{lem-product}
Suppose $h_n \leq Cn^{-\delta}$ and $W \sim K$ has $\E[\|W\|] < \infty$. There exists $N$ such that for all $n \geq N$,
\begin{align}
    \prod_{k = n}^\infty \bigg(\frac{\phi_K(h_k t)}{k+1} + \frac{k}{k+1}\bigg) &\neq 0. \label{eq-nonzero}
\end{align}
\end{lemma}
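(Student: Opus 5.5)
Write each factor as $a_k(t) = 1 - \frac{1-\phi_K(h_k t)}{k+1}$. An infinite product $\prod_{k\ge n}(1-z_k)$ with $z_k \in \mathbb{C}$, $|z_k|<1$ for all $k \geq n$, and $\sum_k |z_k| < \infty$ converges to a nonzero limit. So it suffices to show two things for the fixed $t$:
\begin{align*}
\sum_{k} \frac{|1-\phi_K(h_k t)|}{k+1} < \infty, \qquad |1-\phi_K(h_k t)| < 1 \text{ for all } k \geq N .
\end{align*}

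\textbf{Summability.} The finite first moment gives the standard bound
\begin{align*}
|1-\phi_K(s)| = |\E[1 - e^{i s^T W}]| \le \E\big[\min(2, |s^T W|)\big] \le \|s\|\,\E\|W\|,
\end{align*}
using $|1-e^{ix}| \le |x|$. Applying this with $s = h_k t$ and $h_k \le C k^{-\delta}$ gives
\begin{align*}
\frac{|1-\phi_K(h_k t)|}{k+1} \le \frac{C\|t\|\,\E\|W\|}{k^{1+\delta}},
\end{align*}
which is summable since $\delta>0$.

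\textbf{Choosing $N$.} Because $h_k \to 0$, the same bound shows $|1-\phi_K(h_k t)| \to 0$. Hence there is an $N$ (depending on $t$) with $|z_k| \le 1/2$ for all $k \ge N$. Then $\log(1-z_k)$ is defined by the principal branch, and $|\log(1-z_k)| \le 2|z_k|$.

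\textbf{Conclusion.} For $n \ge N$, the series $\sum_{k\ge n}\log a_k(t)$ converges absolutely. Therefore
\begin{align*}
\prod_{k=n}^\infty a_k(t) = \exp\Big(\sum_{k\ge n}\log a_k(t)\Big) \ne 0,
\end{align*}
and the limit exists, being the exponential of a convergent series. This is exactly \eqref{eq-nonzero}.

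The same argument, with $h_{k+1}$ in place of $h_k$, covers the factors $\tilde a_k(t)$ used for the recursive process. It also gives $\prod_{k\ge n} a_k(t) \to 1$ as $n\to\infty$, since the tail sums of the logarithms vanish. That limit is what the characteristic function arguments implicitly use.

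The only step needing care is the complex-valued product. One must check that the principal logarithm is legitimate, which is why $N$ is chosen so that $|z_k|\le 1/2$, and that absolute summability of the $z_k$ controls the logarithms. Beyond that, everything follows from the first-moment Lipschitz bound on $\phi_K$.
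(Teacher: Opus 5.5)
Your proof is correct and follows essentially the same route as the paper: bound $|a_k(t)-1|$ by a constant times $k^{-(1+\delta)}$ using the first moment of $K$, ensure the factors stay away from $0$ for large $k$, and conclude that the product converges to a nonzero limit by absolute summability. Your direct bound $|1-\phi_K(s)|\le \|s\|\,\E\|W\|$ and the explicit principal-logarithm step are a cleaner version of the paper's Taylor-expansion bound and its appeal to the standard infinite-product criterion; your added remark that $\prod_{k\ge n}a_k(t)\to 1$ supplies the limit the characteristic-function argument uses.
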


\begin{proof}
Fix $t \in \R^d$. Taylor expanding $\phi_K$, we get $\phi_K(h_k t) = 1 + i h_k t^T\E[W] + R_k$, where $|R_k| \leq 2 \big(h_k\|t\|\big)\E[\|W\|]$. Then
\begin{align*}
    \sum_{k = n}^\infty \bigg|\frac{\phi_K(h_k t)}{k+1} + \frac{k}{k+1} - 1\bigg| &= \sum_{k = n}^\infty \frac{1}{k+1} \Big| ih_kt^T \E[W] + R_k\Big| \\
    &\leq \sum_{k = n}^\infty \frac{h_k \|t\|}{k+1}\Big(\|\E[W]\| + 2 \E[\|W\|]\Big) \\
    &\leq \sum_{k = n}^\infty C_2 k^{-(1+\delta)} < \infty.
\end{align*}
Using the continuity of $\phi_K$, for $k \geq n \geq N$ we have $\phi_K(h_{k+1} t) \neq -1$, and the triangle inequality gives us $\frac{\phi_K(h_{k + 1} t)}{k+1} + \frac{k}{k+1}\neq 0$, so the convergence of the above sum implies the convergence of \eqref{eq-nonzero} to a nonzero value.
\end{proof}

\bibliographystyle{abbrvnat}
\bibliography{paper-ref}

\end{document}